\newcommand{\Rm}{\mathbb{R}}
\newcommand{\Nm}{\mathbb{N}}
\newcommand{\be}{\[}
\newcommand{\ee}{\]}
\newcommand{\ba}{\begin{eqnarray*}}
\newcommand{\ea}{\end{eqnarray*}}
\newcommand{\pp}{\partial}
\newcommand{\vv}[1]{\boldsymbol{\mathrm{#1}}}
\newcommand{\argmin}{\mathop{\mathrm{arg\,min}}}
\newtheorem{thm}{Theorem}[section]
\newtheorem{lem}[thm]{Lemma}
\theoremstyle{remark}
\begin{document}

\title[]{The inverse Rytov series for diffuse optical tomography}

\author{Manabu Machida}
\address{Department of Informatics, Faculty of Engineering, 
Kindai University, Higashi-Hiroshima 739-2116, Japan}
\ead{\mailto{machida@hiro.kindai.ac.jp}}


\begin{abstract}
The Rytov approximation is known in near-infrared spectroscopy including diffuse optical tomography. In diffuse optical tomography, the Rytov approximation often gives better reconstructed images than the Born approximation. Although related inverse problems are nonlinear, the Rytov approximation is almost always accompanied by the linearization of nonlinear inverse problems. In this paper, we will develop nonlinear reconstruction with the inverse Rytov series. By this, linearization is not necessary and higher order terms in the Rytov series can be used for reconstruction. The convergence and stability are discussed. We find that the inverse Rytov series has a recursive structure similar to the inverse Born series.
\end{abstract}

\maketitle

\section{Introduction}

We consider diffuse light propagation in a bounded domain $\Omega\subset\Rm^n$ ($n\ge2$) with a smooth boundary $\pp\Omega$. In diffuse optical tomography, coefficients of the diffusion equation are determined from boundary measurements. In this paper, we consider the reconstruction of the absorption coefficient.

The time-independent diffusion equation is given by
\begin{equation}
\cases{
-D_0\Delta u+\mu_au=f,
&\quad $x\in\Omega$,
\\
D_0\pp_{\nu}u+\frac{1}{\zeta}u=0,
&\quad $x\in\pp\Omega$.
}
\label{de1}
\end{equation}
Here, $D_0,\zeta$ are positive constants, and $\pp_{\nu}$ denotes the directional derivative with the outward unit vector $\nu$ normal to $\pp\Omega$. Furthermore, $\mu_a$ is the absorption coefficient and $f$ is the source term.
The outgoing light $u(x)$ is detected on a subboundary $\Gamma$ of the boundary ($x\in\Gamma\subset\pp\Omega$). On the boundary, we suppose $u\in L^p(\Gamma)$ with some $p\ge1$.

Since the cost function for the inverse problem of determining coefficients of the diffusion equation in (\ref{de1}) has a complicated landscape, the reconstructed value is trapped in a local minimum if iterative schemes such as the Levenberg-Marqusrdt, Gauss-Newton, and conjugate gradient methods are used. An alternative approach is the use of direct methods in which perturbations of coefficients are reconstructed. The Born and Rytov approximations are frequently used in cooperation with linearization of the nonlinear inverse problem. When the (first) Born approximation is compared with the (first) Rytov approximation, the superiority of the latter has been discussed \cite{Arridge99,Keller69,Kirkinis08}.

A systematic way of inverting the Born series has been studied \cite{Markel-OSullivan-Schotland03,Markel-Schotland07,Moskow-Schotland08,Moskow-Schotland09,Panasyuk-Markel-Carney-Schotland06}. That is, higher-order Born approximations can be implemented with the inverse Born series. In this way, the direct methods can be applied to nonlinear inverse problems without linearization. In Ref.~\cite{Machida-Schotland15}, the inverse Born series was implemented for the transport-based optical tomography. In addition to optical tomography, the Calder\'{o}n problem was considered with the inverse Born series \cite{Arridge-Moskow-Schotland12}. The inverse Born series was applied to inverse problems for scalar waves \cite{Kilgore-Moskow-Schotland12} and for electromagnetic scattering \cite{Kilgore-Moskow-Schotland17}. The series was developed for discrete inverse problems \cite{Chung-Gilbert-Hoskins-Schotland17}. The technique of the inverse Born series was used to investigate the inversion of the Bremmer series \cite{Shehadeh-Malcolm-Schotland17}. The inverse Born series was extended to Banach spaces \cite{Bardsley-Vasquez14,Lakhal18}. Recently, a modified Born series with unconditional convergence was proposed and its inverse series was studied \cite{Abhishek-Bonnet-Moskow20}. The convergence theorem for the inverse Born series has recently been improved \cite{Hoskins-Schotland22}. See Ref.~\cite{Moskow-Schotland19} for recent advances. Moreover a reduced inverse Born series was proposed \cite{Markel-Schotland22}.

Based on the success of past studies on the inverse Born series, in this paper we consider the inversion of the Rytov series. In experimental and clinical researches on optical tomography, quite often the Born approximation is impractical and tomographic images are obtained with the Rytov approximation. After linearization, the Rytov approximation was used for detecting breast cancer \cite{Choe-etal05,Choe-etal09} and used when the brain function was studied through the neurovascular coupling \cite{Eggebrecht-etal14}. The limitation of the linear approximation has been pointed out \cite{Boas97}.

Indeed, the inverse Rytov series was considered for the Helmholtz equation and it was numerically observed that the inverse Rytov series with the first through third approximations give better reconstructed images than the inverse Born series \cite{Tsihrintzis-Devaney00}. In \cite{Marks06}, intermediate approximations between the Born and Rytov approximations were explored. The relation between the inverse Rytov series and Newton's method was investigated \cite{Park-etal11}. In these papers, however, no systematic way of computing higher-order terms was presented.

The remainder of the paper is organized as follows. The Born series is introduced in Sec.~\ref{born} and the Rytov series is introduced in Sec.~\ref{rytov}. Then the inverse Rytov series is discussed in Sec.~\ref{invrytov}. Section \ref{toymodel} is devoted to the implementation of the inverse Rytov series and numerical examples. Concluding remarks are given in Sec.~\ref{concl}.

\section{The Born series}
\label{born}

Let $g$ be a positive constant. We write
\be
\mu_a(x)=g\left(1+\eta(x)\right),\quad\eta\ge-1.
\ee
We suppose that $\eta$ is supported in a closed ball $B_a$ of radius $a$:
\be
\mathop{\mathrm{supp}}\eta\subset B_a\subset\Omega.
\ee
It will be seen below that the Born series converges for sufficiently small $a>0$. We suppose that $\eta\in L^q(B_a)$ for some $q\ge2$.

Let $u_0(x)$ be the solution to the equation (\ref{de1}) in which $\mu_a(x)$ is replaced by $g$. We assume that there exists a constant $\xi>0$ such that $\xi\le u_0$ on $\Gamma$. Let $G(x,y)$ be the Green's function which corresponds to $u_0$. Then the following identity holds.
\be
u(x)=u_0(x)-g\int_{\Omega}G(x,y)\eta(y)u(y)\,dy.
\ee
From the above identity, the Born series can be constructed as
\be
u=u_0+u_1+\cdots,
\ee
where
\be
u_j(x)=-g\int_{\Omega}G(x,y)\eta(y)u_{j-1}(y)\,dy\quad(j=1,2,\dots).
\ee
The first two terms of the Born series are obtained as
\ba
u_1(x)
&=
-g\int_{\Omega}G(x,y)\eta(y)u_0(y)\,dy,
\\
u_2(x)
&=
g^2\int_{\Omega}\int_{\Omega}G(x,y)\eta(y)G(y,z)\eta(z)u_0(z)\,dydz.
\ea

Let us introduce the multilinear operators $K_j:L^q(B_a)\times\cdots\times L^q(B_a)\to L^p(\Gamma)$ such that
\be
u_j=-K_j\eta^{\otimes j},
\ee
where $\eta^{\otimes j}=\eta\otimes\cdots\otimes\eta$ is the $j$-fold tensor product. Here we have
\ba
&\fl
K_1\eta=
g\int_{B_a}G(x,y)u_0(y)\eta(y)\,dy,
\\
&\fl
K_2\eta\otimes\eta=
-g^2\int_{B_a}\int_{B_a}G(x,y)G(y,z)u_0(z)\eta(y)\eta(z)\,dydz.
\ea
In general, the $j$th term is given by
\ba
K_j\eta^{\otimes j}
&=
(-1)^{j+1}g^j\int_{B_a\times\cdots\times B_a}G(x,y_1)G(y_1,y_2)\cdots
G(y_{j-1},y_j)
\\
&\times
u_0(y_j)\eta(y_1)\cdots\eta(y_j)\,dy_1\cdots dy_j.
\ea

Let us define the operators $\check{K}_j:L^q(B_a)\times\cdots\times L^q(B_a)\to L^{p}(\Gamma)$ such that
\be
\frac{1}{u_0}K_j\eta^{\otimes j}=\check{K}_j\eta^{\otimes j}.
\ee
We introduce
\be\fl
\mu=g\sup_{x\in B_a}\left\|G(x,\cdot)\right\|_{L^r(B_a)},\quad
\nu=g|B_a|^{1/r}\sup_{y_1,y_2\in B_a}\left\|G(\cdot,y_1)
\frac{u_0(y_2)}{u_0(\cdot)}\right\|_{L^p(\Gamma)},
\ee
where $r=q/(q-1)$.

\begin{lem}
For $j=1,2,\dots$, $\|\check{K}_j\|\le\nu\mu^{j-1}$.
\end{lem}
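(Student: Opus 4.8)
The plan is to bound $\check{K}_j\eta^{\otimes j}$ pointwise on $\Gamma$ by moving the absolute values inside the $j$-fold integral, then to pull the $L^p(\Gamma)$-norm through the spatial integration by Minkowski's integral inequality, and finally to strip off one Green's function and one factor of $\eta$ at a time by H\"older's inequality with the conjugate exponents $q$ and $r=q/(q-1)$.

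First I would regroup the integrand so that the $x$-dependence is confined to a single factor,
\be
\check{K}_j\eta^{\otimes j}(x)=(-1)^{j+1}g^j\int_{B_a^j}\frac{G(x,y_1)u_0(y_j)}{u_0(x)}\,G(y_1,y_2)\cdots G(y_{j-1},y_j)\,\eta(y_1)\cdots\eta(y_j)\,dy_1\cdots dy_j,
\ee
and then take the $L^p(\Gamma)$-norm in $x$. Minkowski's integral inequality moves the norm inside the spatial integration, and since $\eta$ is supported in $B_a$ the factor $\|G(\cdot,y_1)u_0(y_j)/u_0(\cdot)\|_{L^p(\Gamma)}$ is bounded by $\nu/(g|B_a|^{1/r})$ uniformly in $y_1,y_j\in B_a$ by the definition of $\nu$. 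This yields
\be
\|\check{K}_j\eta^{\otimes j}\|_{L^p(\Gamma)}\le\frac{\nu}{|B_a|^{1/r}}\,g^{j-1}\int_{B_a^j}|\eta(y_1)|\,|G(y_1,y_2)|\,|\eta(y_2)|\cdots|G(y_{j-1},y_j)|\,|\eta(y_j)|\,dy_1\cdots dy_j.
\ee

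It then remains to estimate the $j$-fold integral, which I would do by integrating from the end of the chain. Integrating in $y_j$ first and using H\"older, $\int_{B_a}|G(y_{j-1},y_j)|\,|\eta(y_j)|\,dy_j\le\|G(y_{j-1},\cdot)\|_{L^r(B_a)}\,\|\eta\|_{L^q(B_a)}\le(\mu/g)\,\|\eta\|_{L^q(B_a)}$, uniformly in $y_{j-1}$; iterating this in $y_{j-1},\dots,y_2$ (at each step the integrated variable sits in only one remaining Green's function) produces $(\mu/g)^{j-1}\|\eta\|_{L^q(B_a)}^{j-1}$, while the leftover integral $\int_{B_a}|\eta(y_1)|\,dy_1=\|\eta\|_{L^1(B_a)}\le|B_a|^{1/r}\|\eta\|_{L^q(B_a)}$ by H\"older, since $1=1/q+1/r$. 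Multiplying everything together, the two powers of $|B_a|^{1/r}$ and of $g$ cancel and one is left with $\|\check{K}_j\eta^{\otimes j}\|_{L^p(\Gamma)}\le\nu\mu^{j-1}\|\eta\|_{L^q(B_a)}^j$; the same computation carried out with distinct arguments $\eta_1,\dots,\eta_j$ gives the claimed bound on the multilinear operator norm. I expect the only genuine obstacle to be the bookkeeping of the constants --- in particular, recognizing that the outermost $y_1$-integration is exactly what supplies the factor $|B_a|^{1/r}$ built into $\nu$, so that no stray volume factor survives --- together with checking that $\mu$ and $\nu$ are finite, which rests on the integrability of $G$ and the lower bound $\xi\le u_0$ on $\Gamma$ assumed in Section \ref{born}, and the routine justification of the Minkowski interchange.
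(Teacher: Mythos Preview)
Your proof is correct and reaches the same bound with the same constants. The route differs from the paper's in the order of operations: the paper applies H\"older's inequality in all $j$ spatial variables at once to separate $\prod_i f_i$ (in $L^q(B_a^j)$) from the chain of Green's functions together with the factor $u_0(y_j)/u_0(x)$ (in $L^r(B_a^j)$), then takes the $L^p(\Gamma)$-norm in $x$ and bounds the remaining Green's-function integral recursively via the auxiliary quantities $I_{j-1}=g^{j-1}\bigl(\int_{B_a^j}|G(y_1,y_2)\cdots G(y_{j-1},y_j)|^r\,dy\bigr)^{1/r}$, showing $I_{j-1}\le\mu I_{j-2}$ and $I_1\le\mu|B_a|^{1/r}$. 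You instead begin with Minkowski's integral inequality to push the $L^p(\Gamma)$-norm inside the spatial integration, and then peel off one pair $G(y_{k-1},y_k)\eta(y_k)$ at a time by H\"older in a single variable. Your version avoids introducing the $L^r$-chain integral $I_{j-1}$ altogether and makes the cancellation of the stray $|B_a|^{1/r}$ factor (supplied at the end by $\|\eta\|_{L^1(B_a)}\le|B_a|^{1/r}\|\eta\|_{L^q(B_a)}$) a little more transparent; the paper's joint-H\"older argument is marginally shorter once $I_{j-1}$ is defined. Both are standard and equally rigorous.
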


\begin{proof}
For any $f_i\in L^q(B_a)$ ($i=1,\dots,j$), the multilinear operators $\check{K}_j$ are written as
\ba
(\check{K}_jf_1\otimes\cdots\otimes f_j)(x)
&=
\frac{(-1)^{j+1}g^j}{u_0(x)}
\int_{B_a\times\cdots\times B_a}G(x,y_1)G(y_1,y_2)\cdots
G(y_{j-1},y_j)
\\
&\times
u_0(y_j)f_1(y_1)\cdots f_j(y_j)\,dy_1\cdots dy_j,
\quad x\in\Gamma.
\ea
Using H\"{o}lder's inequality, we have
\ba
&
\|\check{K}_jf_1\otimes\cdots\otimes f_j\|_{L^p(\Gamma)}^p
\\
&=
\left(g^j\right)^p\int_{\Gamma}\Biggl|\int_{B_a\times\cdots\times B_a}
G(x,y_1)G(y_1,y_2)\cdots G(y_{j-1},y_j)
\\
&\times
\frac{u_0(y_j)}{u_0(x)}f_1(y_1)\cdots f_j(y_j)\,dy_1\cdots dy_j\Biggr|^p\,dx
\\
&\le
g^{jp}\int_{\Gamma}\Biggl|
\left(\int_{B_a\times\cdots\times B_a}|f_1(y_1)\cdots f_j(y_j)|^q\,dy_1\cdots dy_j
\right)^{1/q}
\\
&\times
\left(\int_{B_a\times\cdots\times B_a}
\left|G(x,y_1)G(y_1,y_2)\cdots G(y_{j-1},y_j)\frac{u_0(y_j)}{u_0(x)}\right|^r
\,dy_1\cdots dy_j\right)^{1/r}
\Biggr|^p\,dx
\\
&\le
g^{jp}\|f_1\|_{L^q(B_a)}^p\cdots\|f_j\|_{L^q(B_a)}^p
\int_{\Gamma}\left|\sup_{y_1,y_j\in B_a}G(x,y_1)\frac{u_0(y_j)}{u_0(x)}\right|^p\,dx
\\
&\times
\left(\int_{B_a\times\cdots\times B_a}
\left|G(y_1,y_2)\cdots G(y_{j-1},y_j)\right|^r
\,dy_1\cdots dy_j\right)^{p/r}.
\ea
We define
\be
I_{j-1}=g^{j-1}\left(\int_{B_a\times\cdots\times B_a}
\left|G(y_1,y_2)\cdots G(y_{j-1},y_j)\right|^r\,dy_1\cdots dy_j\right)^{1/r}.
\ee
Similar to the calculation in \cite{Moskow-Schotland08}, we have
\be
I_{j-1}\le\mu I_{j-2},\quad I_1\le|B_a|^{1/r}\mu.
\ee
Hence,
\be
I_{j-1}\le\mu^{j-1}|B_a|^{1/r}\quad(j=2,3,\dots).
\ee
We obtain
\be
\|\check{K}_jf_1\otimes\cdots\otimes f_j\|_{L^p(\Gamma)}^p\le
\|f_1\|_{L^q(B_a)}^p\cdots\|f_j\|_{L^q(B_a)}^p\nu^p\mu^{p(j-1)}.
\ee
Therefore,
\be
\|\check{K}_j\|
=
\sup_{f_1,\dots,f_j\in L^q(B_a)\atop f_i\neq0\;(i=1,\dots,j)}
\frac{\left\|\check{K}_jf_1\otimes\cdots\otimes f_j\right\|_{L^p(\Gamma)}}
{\|f_1\|_{L^q(B_a)}\cdots\|f_j\|_{L^q(B_a)}}
\le\nu\mu^{j-1}.
\ee
\end{proof}

\section{The Rytov series}
\label{rytov}

Let us consider the Rytov series: $u=u_0e^{-\psi_1-\psi_2-\cdots}$. The function $\psi_j$ ($j=1,2,\dots$) is proportional to $g^j$. In particular, we consider boundary values of $u,u_0$ at $x\in\Gamma$. We introduce
\be
\psi=\psi(x)=\ln\frac{u_0(x)}{u(x)},\quad x\in\Gamma.
\ee
We assume $\psi\in L^p(\Gamma)$. We have
\ba
-\psi
&=
\ln\frac{u_0+u_1+\cdots}{u_0}=
\ln\left(1+\sum_{j=1}^{\infty}\frac{u_j}{u_0}\right)
\\
&=
\sum_{k=1}^{\infty}\frac{(-1)^{k+1}}{k}\left(\sum_{j=1}^{\infty}\frac{u_j}{u_0}\right)^k
\\
&=
\frac{u_1+u_2+\cdots}{u_0}-
\frac{(u_1+u_2+\cdots)^2}{2u_0^2}+
\frac{(u_1+u_2+\cdots)^3}{3u_0^3}-\cdots
\\
&=
-\psi_1-\psi_2-\cdots.
\ea
By collecting the first- and second-order terms, the first two terms of the Rytov series are explicitly written as
\be
\psi_1=-\frac{u_1}{u_0},\quad
\psi_2=-\frac{u_2}{u_0}+\frac{1}{2}\left(\frac{u_1}{u_0}\right)^2.
\ee
In general, we have
\be
\psi_j=\sum_{m=1}^j\frac{(-1)^m}{mu_0^m}\sum_{i_1+\cdots+i_m=j}u_{i_1}\cdots u_{i_m},\quad j=1,2,\dots.
\ee
We note that the number of $j$th order terms in $(u_1+\cdots)^m$ is
\be
\left(\begin{array}{c}j-1\\m-1\end{array}\right).
\ee
In total, the number of terms in $\psi_j$ is
\be
\sum_{m=1}^{j-1}\left(\begin{array}{c}j-1\\m-1\end{array}\right)=2^{j-1}.
\ee

We introduce the forward operators $J_j:L^q(B_a)\times\cdots\times L^q(B_a)\to L^p(\Gamma)$ such that
\be
\psi_j=J_j\eta^{\otimes j}\quad(j=1,2,\dots).
\ee
Note that $J_j$ are multilinear. We have
\ba
&\fl
J_1\eta=\check{K}_1\eta=
\frac{g}{u_0(x)}\int_{\Omega}G(x,y)u_0(y)\eta(y)\,dy,
\\
&\fl
J_2\eta\otimes\eta=
\check{K}_2\eta\otimes\eta+\frac{1}{2}\left(\check{K}_1\eta\right)^2
\\
&\fl=
\frac{g^2}{u_0(x)}\int_{\Omega}\int_{\Omega}G(x,y)G(y,z)u_0(z)\eta(y)\eta(z)\,dydz
+\frac{g^2}{2u_0(x)^2}\left(\int_{\Omega}G(x,y)u_0(y)\eta(y)\,dy\right)^2.
\ea
In general, the $j$th term is given by
\be
J_j\eta^{\otimes j}=\sum_{m=1}^j\frac{1}{m}\sum_{i_1+\cdots+i_m=j}
\left(\check{K}_{i_1}\eta^{\otimes i_1}\right)\cdots\left(\check{K}_{i_m}\eta^{\otimes i_m}\right).
\ee

\begin{lem}
We have $\|J_j\|\le\nu\left(\mu+\nu\right)^{j-1}$ for $j=1,2,\dots$. Moreover the Rytov series converges if $\|\eta\|_{L^q(B_a)}<(\mu+\nu)^{-1}$.
\end{lem}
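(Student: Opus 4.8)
The plan is to bound $\|J_j\|$ by expanding the explicit formula for $J_j\eta^{\otimes j}$ in terms of the $\check K_{i}$, applying the triangle inequality, using the submultiplicativity of the norm on products in $L^p(\Gamma)$ (which holds because $\psi,\psi_j\in L^p(\Gamma)$ and the ambient space of products is controlled—here I would use that pointwise products of the relevant factors land in $L^p(\Gamma)$ with $\|fg\|_{L^p}\le\|f\|_{L^p}\|g\|_{L^p}$ after the appropriate normalization by $u_0$, exactly as in the treatment of the inverse Born series), and then inserting the estimate $\|\check K_i\|\le\nu\mu^{i-1}$ from the previous lemma. After substitution, the bound on $\|J_j\|$ becomes a sum over compositions $i_1+\cdots+i_m=j$ of the quantity $\frac1m\nu^m\mu^{j-m}$, weighted by the number of such compositions, which is $\binom{j-1}{m-1}$. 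So I expect to arrive at
\be
\|J_j\|\le\sum_{m=1}^j\frac1m\binom{j-1}{m-1}\nu^m\mu^{j-m}.
\ee

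The next step is to control this sum. Dropping the harmless factor $1/m\le1$ gives
\be
\|J_j\|\le\sum_{m=1}^j\binom{j-1}{m-1}\nu^m\mu^{j-m}
=\nu\sum_{m=1}^j\binom{j-1}{m-1}\nu^{m-1}\mu^{(j-1)-(m-1)}
=\nu(\mu+\nu)^{j-1},
\ee
by the binomial theorem (re-indexing $\ell=m-1$). This yields the claimed operator-norm bound $\|J_j\|\le\nu(\mu+\nu)^{j-1}$.

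For convergence of the Rytov series $\sum_j\psi_j=\sum_j J_j\eta^{\otimes j}$ in $L^p(\Gamma)$, I would estimate
\be
\sum_{j=1}^{\infty}\|\psi_j\|_{L^p(\Gamma)}
\le\sum_{j=1}^{\infty}\|J_j\|\,\|\eta\|_{L^q(B_a)}^j
\le\nu\sum_{j=1}^{\infty}(\mu+\nu)^{j-1}\|\eta\|_{L^q(B_a)}^j,
\ee
a geometric series that converges precisely when $(\mu+\nu)\|\eta\|_{L^q(B_a)}<1$, i.e. $\|\eta\|_{L^q(B_a)}<(\mu+\nu)^{-1}$; absolute summability in the Banach space $L^p(\Gamma)$ then gives convergence.

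The main obstacle is the first step: justifying that the pointwise products $(\check K_{i_1}\eta^{\otimes i_1})\cdots(\check K_{i_m}\eta^{\otimes i_m})$ obey a submultiplicative $L^p(\Gamma)$ bound. Strictly, $\|fg\|_{L^p}\le\|f\|_{L^p}\|g\|_{L^p}$ is false in general; one needs either an $L^\infty$ bound on all but one factor, or a Hölder-exponent bookkeeping argument. The clean resolution, following the inverse Born series literature, is to observe that each $\check K_i\eta^{\otimes i}$ is in fact bounded in $L^\infty(\Gamma)$ with the same estimate $\nu\mu^{i-1}$ (the supremum in the definition of $\nu$ already provides this), so that in a product of $m$ factors one factor is measured in $L^p(\Gamma)$ and the remaining $m-1$ in $L^\infty(\Gamma)$, reproducing the bound $\nu^m\mu^{j-m}$ term by term. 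I would state this $L^\infty$ refinement of the previous lemma (or note it follows from the same proof with $p=\infty$) and then the rest is the routine combinatorics above.
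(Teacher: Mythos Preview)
Your approach matches the paper's proof essentially line by line: the same expansion of $J_j$ in the $\check K_i$'s, the composition count $\binom{j-1}{m-1}$, dropping the factor $1/m$, the binomial identity $\sum_{m=1}^j\binom{j-1}{m-1}\nu^m\mu^{j-m}=\nu(\mu+\nu)^{j-1}$, and the geometric-series argument for convergence. The paper likewise passes silently over the submultiplicativity issue you flag, simply asserting $\|J_j\|\le\sum_m\frac1m\sum_{i_1+\cdots+i_m=j}\|\check K_{i_1}\|\cdots\|\check K_{i_m}\|$ without justification; your $L^\infty$ remedy is the natural one, though note that as stated it would replace $\nu$ by its $L^\infty(\Gamma)$ analogue in all but one factor rather than recover the exact constant.
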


\begin{proof}
We note the binomial formula:
\begin{equation}
x(x+y)^{j-1}=\sum_{m=1}^j\left(\begin{array}{c}j-1\\m-1\end{array}\right)x^my^{j-m}.
\label{binomial}
\end{equation}
We have
\ba
\|J_j\|
&\le
\sum_{m=1}^j\frac{1}{m}\sum_{i_1+\cdots+i_m=j}
\|\check{K}_{i_1}\|\cdots\|\check{K}_{i_m}\|
\\
&\le
\sum_{m=1}^j\left(\begin{array}{c}j-1\\ m-1\end{array}\right)\nu^m\mu^{j-m}
\\
&=
\nu\left(\mu+\nu\right)^{j-1}.
\ea

Since we have
\ba
\sum_{j=1}^{\infty}\|\psi_j\|_{L^p(\Gamma)}
&=
\sum_{j=1}^{\infty}\|J_j\eta\otimes\cdots\otimes\eta\|_{L^p(\Gamma)}
\le\sum_{j=1}^{\infty}\|J_j\|\|\eta\|_{L^q(B_a)}^j
\\
&\le
\nu\left(\mu+\nu\right)^{-1}
\sum_{j=1}^{\infty}\left(\mu+\nu\right)^j\|\eta\|_{L^q(B_a)}^j,
\ea
the series converges if $\|\eta\|_{L^q(B_a)}<(\mu+\nu)^{-1}$.
\end{proof}

\section{Inverse Rytov series}
\label{invrytov}

We begin by formally expanding the perturbation $\eta$ as
\ba
\eta
&=
\eta_1+\eta_2+\cdots
\\
&=
\mathcal{J}_1\psi+\mathcal{J}_2\psi\otimes\psi+\cdots.
\ea
We refer to the above series as the inverse Rytov series. If we substitute the series $\psi=J_1\eta+J_2\eta\otimes\eta+\cdots$, we have
\ba
&\fl
\eta=
\mathcal{J}_1\left(J_1\eta+J_2\eta\otimes\eta+\cdots\right)+
\mathcal{J}_2\left(J_1\eta+J_2\eta\otimes\eta+\cdots\right)\otimes\left(J_1\eta+J_2\eta\otimes\eta+\cdots\right)+\cdots
\\
&\fl=
\mathcal{J}_1J_1\eta+\left(\mathcal{J}_1J_2+\mathcal{J}_2J_1\otimes J_1\right)\eta\otimes\eta+\cdots.
\ea
Thus we obtain
\ba
&&
\mathcal{J}_1J_2+\mathcal{J}_2J_1\otimes J_1=0,
\\
&&
\mathcal{J}_3J_1\otimes J_1\otimes J_1+\mathcal{J}_2J_1\otimes J_2+
\mathcal{J}_2J_2\otimes J_1+\mathcal{J}_1J_3=0,\dots.
\ea

Indeed, the equality $\eta=\mathcal{J}_1J_1\eta$ does not hold due to the ill-posedness of this inverse problem. To consider $\mathcal{J}_1$, let us introduce $\eta^*$ as \cite{Machida-Schotland15}
\be
\eta^*=\argmin_{\eta\in B_a}\left(\frac{1}{2}\|J_1\eta-\psi\|_{L^p(\Gamma)}^2+\alpha R(\eta)\right),
\ee
where $R(\eta)$ is a penalty function with a regularization parameter $\alpha>0$ \cite{Engl-Hanke-Neubauer96,Morozov93,Schuster-Kaltenbacher-Hofmann-Kazimierski12}. The regularized pseudoinverse of $J_1$ is defined as $\mathcal{J}_1:\psi\mapsto\eta^*$. With this operator $\mathcal{J}_1$, we have
\be\fl
\mathcal{J}_2\psi\otimes\psi=
-\mathcal{J}_1J_2(\mathcal{J}_1\otimes\mathcal{J}_1)(\psi\otimes\psi)=
-\mathcal{J}_1\left[\check{K}_2(\mathcal{J}_1\otimes\mathcal{J}_1)(\psi\otimes\psi)+
\frac{1}{2}\left(\check{K}_1\mathcal{J}_1\psi\right)^2\right],
\ee
and
\ba\fl
\mathcal{J}_3\psi\otimes\psi\otimes\psi
=
-\left(\mathcal{J}_2J_1\otimes J_2+\mathcal{J}_2J_2\otimes J_1+\mathcal{J}_1 J_3\right)(\mathcal{J}_1\otimes\mathcal{J}_1\otimes\mathcal{J}_1)(\psi\otimes\psi\otimes\psi)
\\
\fl=
-\mathcal{J}_2(J_1\mathcal{J}_1\psi)\otimes\left[
\check{K}_2\mathcal{J}_1\psi\otimes\mathcal{J}_1\psi+\frac{1}{2}(\check{K}_1\mathcal{J}_1\psi)^2\right]
\\
\fl-
\mathcal{J}_2\left[\check{K}_2\mathcal{J}_1\psi\otimes\mathcal{J}_1\psi+
\frac{1}{2}(\check{K}_1\mathcal{J}_1\psi)^2\right]\otimes J_1\mathcal{J}_1\psi
\\
\fl-
\mathcal{J}_1\left[\check{K}_3(\mathcal{J}_1\psi)^{\otimes 3}+
(\check{K}_1\mathcal{J}_1\psi)(\check{K}_2\mathcal{J}_1\psi\otimes\mathcal{J}_1\psi)
+\frac{1}{3}(\check{K}_1\mathcal{J}_1\psi)^3\right].
\ea
For $j\ge2$, we have
\be
\mathcal{J}_j=-\left(\sum_{m=1}^{j-1}\mathcal{J}_m\sum_{i_1+\cdots+i_m=j}
J_{i_1}\otimes\cdots\otimes J_{i_m}\right)
\mathcal{J}_1\otimes\cdots\otimes\mathcal{J}_1.
\ee

\begin{thm}
\label{thm4_1}
Assume that there exists a constant $M_1<1$ such that $(\mu+2\nu)\|\mathcal{J}_1\|\le M_1$. Then the operator $\mathcal{J}_j:L^p(\Gamma)\times\cdots\times L^p(\Gamma)\to L^q(B_a)$ is bounded and
\be
\|\mathcal{J}_j\|\le C_1\left(\mu+2\nu\right)^j\|\mathcal{J}_1\|,
\ee
where constant $C_1=C_1(M_1)>0$ is independent of $j$. Moreover for any $\psi\in L^p(\Gamma)$, there exists $C_2=C_2(M_1,\mu,\nu)$ such that
\be
\left\|\mathcal{J}_j\psi^{\otimes j}\right\|_{L^q(B_a)}\le
C_2\left(\mu+2\nu\right)^j\|\mathcal{J}_1\psi\|_{L^q(B_a)}^j.
\ee
\end{thm}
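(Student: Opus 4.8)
The plan is to adapt the recursive-bound argument of Moskow--Schotland for the inverse Born series \cite{Moskow-Schotland08} to the present setting, the only change being that the forward operators $\check K_j$ are replaced by the Rytov forward operators $J_j$, for which Lemma 3.x gives the bound $\|J_j\|\le\nu(\mu+\nu)^{j-1}$. First I would fix notation: write $\Pi=\mathcal J_1\otimes\cdots\otimes\mathcal J_1$ ($j$-fold) so that, from the recursion
\be
\mathcal J_j=-\left(\sum_{m=1}^{j-1}\mathcal J_m\sum_{i_1+\cdots+i_m=j}J_{i_1}\otimes\cdots\otimes J_{i_m}\right)\mathcal J_1\otimes\cdots\otimes\mathcal J_1,
\ee
one gets the operator-norm inequality
\be
\|\mathcal J_j\|\le\|\mathcal J_1\|^{\,j}\sum_{m=1}^{j-1}\|\mathcal J_m\|\sum_{i_1+\cdots+i_m=j}\|J_{i_1}\|\cdots\|J_{i_m}\|.
\ee

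The key algebraic step is to evaluate the inner sum. Using $\|J_i\|\le\nu(\mu+\nu)^{i-1}$ and the binomial identity \eqref{binomial}, for a composition $i_1+\cdots+i_m=j$ one has $\|J_{i_1}\|\cdots\|J_{i_m}\|\le\nu^m(\mu+\nu)^{j-m}$, and there are $\binom{j-1}{m-1}$ such compositions, so
\be
\sum_{i_1+\cdots+i_m=j}\|J_{i_1}\|\cdots\|J_{i_m}\|\le\binom{j-1}{m-1}\nu^m(\mu+\nu)^{j-m}.
\ee
I would then set $\beta=\mu+2\nu$ and prove by strong induction that $\|\mathcal J_j\|\le C_1\beta^{\,j}\|\mathcal J_1\|$ for a constant $C_1$ to be chosen. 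Substituting the induction hypothesis and the combinatorial bound into the recursion, the right-hand side becomes $\|\mathcal J_1\|^{\,j}\cdot C_1\|\mathcal J_1\|\sum_{m=1}^{j-1}\binom{j-1}{m-1}\beta^{\,m}\nu^m(\mu+\nu)^{j-m}$; pulling out one factor of $\nu$ and one of $\beta\|\mathcal J_1\|\le M_1$ and using the binomial theorem to resum $\sum_m\binom{j-1}{m-1}(\beta\nu)^{m-1}(\mu+\nu)^{j-m}=(\beta\nu+\mu+\nu)^{j-1}$, I expect the bracket to collapse to something like $\nu\,(\mu+\nu+\beta\nu)^{j-1}$, which is $\le\beta^{\,j}$ up to a constant depending only on $\mu,\nu$; the factor $M_1<1$ from $\beta\|\mathcal J_1\|\le M_1$ then closes the induction with $C_1=C_1(M_1)=(1-M_1)^{-1}$ or similar, after handling the base case $j=1$ (where $C_1\ge1$ suffices). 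The main obstacle is purely bookkeeping: making sure every power of $\nu$, $\mu$, $\beta$, and $\|\mathcal J_1\|$ is accounted for so that the geometric factor is genuinely $\beta^{\,j}$ and the residual constant is $j$-independent; the resummation via \eqref{binomial} is exactly the trick that makes this work, mirroring the Born-series case.

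For the second estimate I would not re-expand in operator norms but instead track $\|\mathcal J_j\psi^{\otimes j}\|_{L^q(B_a)}$ directly through the recursion, exploiting that the inner operators $J_{i_k}$ in the recursion always act on $\mathcal J_1\psi$. Setting $t=\|\mathcal J_1\psi\|_{L^q(B_a)}$, an entirely parallel induction — using $\|J_i\mathcal J_1\psi\otimes\cdots\|\le\|J_i\|\,t^{i}$ and the same combinatorial count — gives $\|\mathcal J_j\psi^{\otimes j}\|_{L^q(B_a)}\le C_2\beta^{\,j}t^{j}$ with $C_2=C_2(M_1,\mu,\nu)$, where the dependence on $\mu,\nu$ (and not just $M_1$) enters because here there is no leading $\|\mathcal J_1\|$ to absorb a factor of $\beta$ at each stage, so the resummed constant retains an explicit $\nu/(\mu+\nu)$-type prefactor. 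Boundedness of each $\mathcal J_j:L^p(\Gamma)\times\cdots\times L^p(\Gamma)\to L^q(B_a)$ is then immediate from the first inequality together with boundedness of $\mathcal J_1$ (the regularized pseudoinverse), which is finite for fixed $\alpha>0$. I would close by remarking that these are the exact analogues of the inverse Born series bounds with $\check K_j$ replaced by $J_j$ and $\mu$ replaced by $\mu+\nu$, which is why the extra $\nu$ appears in the final radius $\mu+2\nu$.
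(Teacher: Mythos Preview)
Your overall architecture is right --- the recursion for $\mathcal{J}_j$, the composition count $\binom{j-1}{m-1}$, and the role of the binomial identity (\ref{binomial}) are all exactly as in the paper --- but the way you propose to close the induction has a genuine gap, and it is not ``purely bookkeeping''.

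You substitute the hypothesis $\|\mathcal{J}_m\|\le C_1\beta^m\|\mathcal{J}_1\|$ (with $\beta=\mu+2\nu$) directly into the $m$-sum, which produces $\sum_m\binom{j-1}{m-1}(\beta\nu)^m(\mu+\nu)^{j-m}$ and hence, after resummation, a factor $\beta\nu(\beta\nu+\mu+\nu)^{j-1}$. You then assert this is $\le\beta^{j}$ up to a $j$-independent constant. That is false in general: it would require $\beta\nu+\mu+\nu\le\beta=\mu+2\nu$, i.e.\ $\beta\le1$, and no such smallness of $\mu+2\nu$ is assumed --- the only hypothesis is $(\mu+2\nu)\|\mathcal{J}_1\|\le M_1<1$. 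Keeping the factor $\|\mathcal{J}_1\|^{j}$ does not rescue the argument either: the induction closes only if $\|\mathcal{J}_1\|(\beta\nu+\mu+\nu)\le\beta$, and the example $\mu=0$, $\nu=0.1$, $\|\mathcal{J}_1\|=4.5$ (so $M_1=0.9$) gives $\|\mathcal{J}_1\|(\beta\nu+\mu+\nu)=0.54>0.2=\beta$, so the step fails.

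The paper avoids this by \emph{not} inserting the inductive hypothesis before resumming. It first applies the crude decoupling
\[
\sum_{m=1}^{j-1}\|\mathcal{J}_m\|\binom{j-1}{m-1}\nu^m(\mu+\nu)^{j-m}
\le\Bigl(\sum_{m=1}^{j-1}\|\mathcal{J}_m\|\Bigr)\Bigl(\sum_{m=1}^{j-1}\binom{j-1}{m-1}\nu^m(\mu+\nu)^{j-m}\Bigr),
\]
so that the binomial resummation gives $\nu(\mu+2\nu)^{j-1}$ cleanly, with no stray $\beta$ inside the base. The resulting inequality $\|\mathcal{J}_j\|\le(\beta\|\mathcal{J}_1\|)^{j}\sum_{m=1}^{j-1}\|\mathcal{J}_m\|$ is then a recursion on the \emph{cumulative} sum $S_j=\sum_{m\le j}\|\mathcal{J}_m\|$, namely $S_j\le S_{j-1}\bigl(1+(\beta\|\mathcal{J}_1\|)^{j}\bigr)$, which telescopes to the convergent infinite product $\prod_{m\ge2}\bigl(1+M_1^m\bigr)$ and yields $C_1=\exp\bigl(1/(1-M_1)\bigr)$. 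The second estimate is then not a parallel induction but an immediate corollary: the same decoupling gives $\|\mathcal{J}_j\psi^{\otimes j}\|\le(\mu+2\nu)^{j}\|\mathcal{J}_1\psi\|^{j}\sum_{m<j}\|\mathcal{J}_m\|$, and the already-established bound on $\sum_{m<j}\|\mathcal{J}_m\|$ supplies $C_2$. In short, the product-of-sums decoupling is the step that makes the recursion close without any smallness assumption on $\mu+2\nu$ itself; your direct strong induction cannot do this.
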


\begin{proof}
We find that for $j\ge2$,
\ba
&\fl
\|\mathcal{J}_j\|=
\left\|\left(\sum_{m=1}^{j-1}\mathcal{J}_m\sum_{i_1+\cdots+i_m=j}
J_{i_1}\otimes\cdots\otimes J_{i_m}\right)
\mathcal{J}_1\otimes\cdots\otimes\mathcal{J}_1\right\|
\\
&\fl\le
\left\|\sum_{m=1}^{j-1}\mathcal{J}_m\nu^m
\sum_{i_1+\cdots+i_m=j}\left(\mu+\nu\right)^{i_1-1}\cdots
\left(\mu+\nu\right)^{i_m-1}\right\|\|\mathcal{J}_1\|^j
\\
&\fl\le
\sum_{m=1}^{j-1}\|\mathcal{J}_m\|\nu^m
\left(\begin{array}{c}j-1\\ m-1\end{array}\right)\left(\mu+\nu\right)^{j-m}\|\mathcal{J}_1\|^j
\\
&\fl\le
\|\mathcal{J}_1\|^j
\left(\sum_{m=1}^{j-1}\|\mathcal{J}_m\|\right)
\left(\sum_{m=1}^{j-1}
\left(\begin{array}{c}j-1\\ m-1\end{array}\right)\nu^m\left(\mu+\nu\right)^{j-m}\right).
\ea
By using (\ref{binomial}), we have
\ba
\|\mathcal{J}_j\|
&\le
\|\mathcal{J}_1\|^j
\left(\sum_{m=1}^{j-1}\|\mathcal{J}_m\|\right)
\left(\nu\left(\mu+2\nu\right)^{j-1}-\nu^j\right)
\\
&\le
\nu\|\mathcal{J}_1\|^j\left(\mu+2\nu\right)^{j-1}
\sum_{m=1}^{j-1}\|\mathcal{J}_m\|
\\
&\le
\|\mathcal{J}_1\|^j\left(\mu+2\nu\right)^j
\sum_{m=1}^{j-1}\|\mathcal{J}_m\|.
\ea

By noticing the recursive structure of the above inequality, we can write
\be
\|\mathcal{J}_j\|\le
c_j\left[\left(\mu+2\nu\right)\|\mathcal{J}_1\|\right]^j
\|\mathcal{J}_1\|,
\ee
where
\be
c_{j+1}=c_j+
\left[\left(\mu+2\nu\right)\|\mathcal{J}_1\|\right]^jc_j,\quad
c_2=1.
\ee
Hence we obtain
\be
c_j=\prod_{m=2}^{j-1}\left(1+\left[\left(\mu+2\nu\right)\|\mathcal{J}_1\|\right]^m\right),\quad j\ge3.
\ee
We note that
\ba
\ln{c_j}
&\le
\sum_{m=1}^{j-1}\ln\left(1+
\left[\left(\mu+2\nu\right)\|\mathcal{J}_1\|\right]^m\right)
\\
&\le
\sum_{m=1}^{j-1}\left[\left(\mu+2\nu\right)\|\mathcal{J}_1\|\right]^m
\\
&\le
\frac{1}{1-\left(\mu+2\nu\right)\|\mathcal{J}_1\|}
\\
&\le
\frac{1}{1-M_1}.
\ea
Thus $c_j$ ($j\ge2$) are bounded. We put $C_1=\exp(1/(1-M_1))$.

We note that
\be
\left\|\mathcal{J}_j\psi^{\otimes j}\right\|_{L^q(B_a)}\le
\|\mathcal{J}_1\psi\|_{L^q(B_a)}^j\left(\mu+2\nu\right)^j
\sum_{m=1}^{j-1}\|\mathcal{J}_m\|,
\ee
and
\be
\sum_{m=1}^{j-1}\|\mathcal{J}_m\|\le
C_1\|\mathcal{J}_1\|\left(\mu+2\nu\right)
\frac{1-\left(\mu+2\nu\right)^{j-1}}{1-\left(\mu+2\nu\right)}.
\ee
Hence we obtain
\ba
&\fl
\left\|\mathcal{J}_j\psi^{\otimes j}\right\|_{L^q(B_a)}
\le
C_1\left(\mu+2\nu\right)^{j+1}\|\mathcal{J}_1\|
\frac{1-\left(\mu+2\nu\right)^{j-1}}{1-\left(\mu+2\nu\right)}
\|\mathcal{J}_1\psi\|_{L^q(B_a)}^j
\\
&\fl\le
\frac{C_1M_1}{1-\left(\mu+2\nu\right)}
\left(\mu+2\nu\right)^j\|\mathcal{J}_1\psi\|_{L^q(B_a)}^j.
\ea
The proof is complete if we set
\be
C_2=\frac{C_1M_1}{1-\left(\mu+2\nu\right)}.
\ee
\end{proof}

Let us consider the convergence of the inverse Rytov series. If the inverse Rytov series converges, we write
\be
\eta\approx\widetilde{\eta},
\ee
where
\be
\widetilde{\eta}=\sum_{j=1}^{\infty}\mathcal{J}_j\psi^{\otimes j}.
\ee

\begin{thm}
Assume that there exists a constant $M_1<1$ such that $(\mu+2\nu)\|\mathcal{J}_1\|\le M_1$. Suppose that $\|\mathcal{J}_1\psi\|_{L^q(B_a)}<(\mu+2\nu)^{-1}$. Let $M_2=\max(\|\eta\|_{L^q(B_a)},\|\mathcal{J}_1J_1\eta\|_{L^q(B_a)})$. We assume that $M_2<(\mu+2\nu)^{-1}$. Then for any $N\in\Nm$ there exists constants $C_3=C_3(M_1,M_2,\mu,\nu)>0$ such that
\be\fl
\left\|\eta-\sum_{j=1}^N\mathcal{J}_j\psi^{\otimes j}\right\|_{L^q(B_a)}
\le
C_3\|(I-\mathcal{J}_1J_1)\eta\|_{L^q(B_a)}+C_2
\frac{\left[(\mu+2\nu)\|\mathcal{J}_1\psi\|_{L^q(B_a)}\right]^{N+1}}{1-(\mu+2\nu)\|\mathcal{J}_1\psi\|_{L^q(B_a)}},
\ee
where constant $C_2>0$ is given in Theorem \ref{thm4_1}.
\end{thm}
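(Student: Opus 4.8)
The plan is to mimic the convergence proof for the inverse Born series (as in \cite{Moskow-Schotland08,Machida-Schotland15}), splitting the error into a ``nonlinearity'' part controlled by the tail of a geometric series and a ``regularization'' part controlled by $\|(I-\mathcal{J}_1J_1)\eta\|$. First I would substitute the forward Rytov series $\psi=\sum_{k\ge1}J_k\eta^{\otimes k}$ into the partial sum $\sum_{j=1}^N\mathcal{J}_j\psi^{\otimes j}$ and use the defining recursion for the $\mathcal{J}_j$ (the identities $\mathcal{J}_1J_2+\mathcal{J}_2J_1\otimes J_1=0$, etc., and their general form) to collapse the telescoping cancellations. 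As in the Born case, all contributions of total order $\le N$ cancel except for the leading term $\mathcal{J}_1J_1\eta$, leaving
\be
\sum_{j=1}^N\mathcal{J}_j\psi^{\otimes j}
=\mathcal{J}_1J_1\eta+(\text{remainder of order}\ge N+1\ \text{in}\ \eta).
\ee
So I would write
\be
\eta-\sum_{j=1}^N\mathcal{J}_j\psi^{\otimes j}
=(I-\mathcal{J}_1J_1)\eta-\mathcal{R}_N,
\ee
where $\mathcal{R}_N$ collects the leftover high-order terms, and estimate the two pieces separately.

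For the first piece I would need to bound $\|(I-\mathcal{J}_1J_1)\eta\|$ by itself times a constant, which is immediate, but the subtlety is that the cancellation argument introduces factors $\mathcal{J}_1J_1$ acting on $\eta$ at various places; I would use that $\|\mathcal{J}_1J_1\eta\|\le M_2<(\mu+2\nu)^{-1}$ and $\|\eta\|\le M_2$ to keep all such geometric sums finite, absorbing the resulting constants into $C_3=C_3(M_1,M_2,\mu,\nu)$. For the second piece, the remainder $\mathcal{R}_N$ is a sum of terms $\mathcal{J}_j$ applied to tensor powers of the ``observed'' data; the clean way is to note $\mathcal{R}_N=\sum_{j\ge N+1}\mathcal{J}_j\psi^{\otimes j}$ once one recognizes (again via the recursion, exactly as in the Born-series proofs) that the extra order-$\le N$ pieces cancel and the leftover is literally the tail of the full inverse Rytov series evaluated at the true $\psi$. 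Then Theorem \ref{thm4_1} gives $\|\mathcal{J}_j\psi^{\otimes j}\|_{L^q(B_a)}\le C_2(\mu+2\nu)^j\|\mathcal{J}_1\psi\|_{L^q(B_a)}^j$, and summing from $j=N+1$ to $\infty$ yields the stated geometric tail
\be
\sum_{j=N+1}^\infty C_2(\mu+2\nu)^j\|\mathcal{J}_1\psi\|^j
=C_2\frac{\left[(\mu+2\nu)\|\mathcal{J}_1\psi\|_{L^q(B_a)}\right]^{N+1}}{1-(\mu+2\nu)\|\mathcal{J}_1\psi\|_{L^q(B_a)}},
\ee
which converges precisely because of the hypothesis $\|\mathcal{J}_1\psi\|_{L^q(B_a)}<(\mu+2\nu)^{-1}$.

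The main obstacle I anticipate is the bookkeeping in the cancellation step: showing rigorously that when $\psi=\sum J_k\eta^{\otimes k}$ is inserted, the partial sum $\sum_{j=1}^N\mathcal{J}_j\psi^{\otimes j}$ equals $\mathcal{J}_1J_1\eta$ plus exactly the order-$(\ge N+1)$ tail, with no stray lower-order terms. This requires carefully expanding each $\mathcal{J}_j\psi^{\otimes j}$ into its constituent multilinear terms indexed by compositions $i_1+\cdots+i_j$, regrouping by total order $i_1+\cdots+i_j$ in $\eta$, and invoking the recursion $\mathcal{J}_j=-\bigl(\sum_{m=1}^{j-1}\mathcal{J}_m\sum_{i_1+\cdots+i_m=j}J_{i_1}\otimes\cdots\otimes J_{i_m}\bigr)\mathcal{J}_1\otimes\cdots\otimes\mathcal{J}_1$ to see that each fixed total order $2\le k\le N$ contributes zero. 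This is structurally identical to the argument for the inverse Born series, so I would cite \cite{Moskow-Schotland08,Machida-Schotland15} for the combinatorial identity and then only need to track how the regularized pseudoinverse $\mathcal{J}_1$ (which is not a true inverse of $J_1$) propagates: every occurrence of $J_1\mathcal{J}_1$ is harmless (norm $\le M_1$-type bound via $\|\mathcal{J}_1\|$), while the single genuine obstruction $I-\mathcal{J}_1J_1$ survives only at the top, which is exactly what produces the first term on the right-hand side. Once the cancellation is in place, the remaining estimates are routine geometric-series bounds of the type already carried out in the proof of Theorem \ref{thm4_1}.
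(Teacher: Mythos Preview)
Your overall decomposition and your treatment of the tail $\sum_{j\ge N+1}\mathcal{J}_j\psi^{\otimes j}$ are correct and match the paper exactly: one writes $\eta-\sum_{j=1}^N\mathcal{J}_j\psi^{\otimes j}=(\eta-\widetilde{\eta})+(\widetilde{\eta}-\sum_{j=1}^N\mathcal{J}_j\psi^{\otimes j})$, and the second piece is bounded via Theorem \ref{thm4_1} as the geometric tail you wrote down.

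The gap is in the first piece. You assert that when $\psi=\sum_kJ_k\eta^{\otimes k}$ is substituted, ``each fixed total order $2\le k\le N$ contributes zero'' and hence $\sum_{j=1}^N\mathcal{J}_j\psi^{\otimes j}=\mathcal{J}_1J_1\eta+(\text{order }\ge N+1)$. This is false, and it is precisely the point where $\mathcal{J}_1J_1\neq I$ matters. From the recursion, the total order-$k$ contribution to $\widetilde{\eta}$ is
\[
\widetilde{\mathcal{J}}_k\eta^{\otimes k}
=\sum_{m=1}^{k-1}\mathcal{J}_m\!\!\sum_{i_1+\cdots+i_m=k}\!\!J_{i_1}\otimes\cdots\otimes J_{i_m}\,\bigl[\eta^{\otimes k}-(\mathcal{J}_1J_1\eta)^{\otimes k}\bigr],
\]
which does not vanish. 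So $\eta-\widetilde{\eta}$ is not $(I-\mathcal{J}_1J_1)\eta$; it is an infinite sum over $k\ge1$ of such terms. The paper extracts the factor $\|(I-\mathcal{J}_1J_1)\eta\|$ from each of them by the telescoping identity
\[
\eta^{\otimes k}-(\mathcal{J}_1J_1\eta)^{\otimes k}
=\sum_{l=1}^{k}\eta^{\otimes(l-1)}\otimes\bigl[(I-\mathcal{J}_1J_1)\eta\bigr]\otimes(\mathcal{J}_1J_1\eta)^{\otimes(k-l)},
\]
giving $\|\eta^{\otimes k}-(\mathcal{J}_1J_1\eta)^{\otimes k}\|\le kM_2^{k-1}\|(I-\mathcal{J}_1J_1)\eta\|$. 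This is where $M_2$ genuinely enters, and the hypothesis $M_2<(\mu+2\nu)^{-1}$ is what makes the resulting series $\sum_k k\,M_2^{k-1}(\mu+2\nu)^k(\cdots)$ converge to the constant $C_3$. Your sketch mentions $M_2$ and ``factors $\mathcal{J}_1J_1$ at various places'' but treats them as a side issue to be absorbed after an exact cancellation that never occurs; without the telescoping step above you cannot produce the bound $C_3\|(I-\mathcal{J}_1J_1)\eta\|$.
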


\begin{proof}
If we expand $\psi$ in the inverse Rytov series by the Rytov series, we can write
\be
\widetilde{\eta}=\sum_{j=1}^{\infty}\widetilde{\mathcal{J}}_j\eta\otimes\cdots\otimes\eta,
\ee
where
\be
\widetilde{\mathcal{J}}_1=\mathcal{J}_1J_1,
\ee
and
\be\fl
\widetilde{\mathcal{J}}_j=
\left(\sum_{m=1}^{j-1}\mathcal{J}_m\sum_{i_1+\cdots+i_m=j}J_{i_1}\otimes\cdots\otimes J_{i_m}\right)+\mathcal{J}_jJ_1\otimes\cdots\otimes J_1,
\quad j\ge2.
\ee
We have
\be
\widetilde{\mathcal{J}}_j=
\sum_{m=1}^{j-1}\mathcal{J}_m\sum_{i_1+\cdots+i_m=j}J_{i_1}\otimes\cdots\otimes J_{i_m}
\left(I-\mathcal{J}_1J_1\otimes\cdots\otimes\mathcal{J}_1J_1\right).
\ee

Since
\be
\eta-\widetilde{\eta}=
(I-\mathcal{J}_1J_1)\eta-
\mathcal{J}_1J_2\left(\eta\otimes\eta-\mathcal{J}_1J_1\eta\otimes\mathcal{J}_1J_1\eta\right)+\cdots,
\ee
we have
\ba
\left\|\eta-\widetilde{\eta}\right\|_{L^q(B_a)}
&\le
\sum_{j=1}^{\infty}\sum_{m=1}^{j-1}\sum_{i_1+\cdots+i_m=j}
\|\mathcal{J}_m\|\|\mathcal{J}_{i_1}\|\cdots\|\mathcal{J}_{i_m}\|
\\
&\times
\left\|(\eta\otimes\cdots\otimes\eta)-\left(\mathcal{J}_1J_1\eta\otimes\cdots\otimes\mathcal{J}_1J_1\eta\right)\right\|_{L^q(B_a^j)}.
\ea
We note the identity
\ba
&
\left(\eta_1\otimes\cdots\otimes\eta_1\right)-
\left(\eta_2\otimes\cdots\otimes\eta_2\right)
\\
&=
(\eta_1-\eta_2)\otimes\eta_2\otimes\cdots\otimes\eta_2+
\eta_1\otimes(\eta_1-\eta_2)\otimes\eta_2\otimes\cdots\otimes\eta_2
+\cdots
\\
&+
\eta_1\otimes\eta_1\otimes\cdots\otimes(\eta_1-\eta_2)\otimes\eta_2+
\eta_1\otimes\eta_1\otimes\cdots\otimes\eta_1\otimes(\eta_1-\eta_2).
\ea
Hence,
\be\fl
\left\|\eta\otimes\cdots\otimes\eta-\mathcal{J}_1J_1\eta\otimes\cdots\otimes\mathcal{J}_1J_1\eta\right\|_{L^q(B_a^j)}
\le
jM_2^{j-1}\left\|\eta-\mathcal{J}_1J_1\eta\right\|_{L^q(B_a)}.
\ee
We obtain
\be\fl
\left\|\eta-\widetilde{\eta}\right\|_{L^q(B_a)}\le
\sum_{j=1}^{\infty}\sum_{m=1}^{j-1}\sum_{i_1+\cdots+i_m=j}
\|\mathcal{J}_m\|\|\mathcal{J}_{i_1}\|\cdots\|\mathcal{J}_{i_m}\|
jM_2^{j-1}\left\|\eta-\mathcal{J}_1J_1\eta\right\|_{L^q(B_a)}.
\ee
Furthermore,
\ba
&\fl
\left\|\eta-\widetilde{\eta}\right\|_{L^q(B_a)}
\\
&\fl\le
\sum_{j=1}^{\infty}\sum_{m=1}^{j-1}jM_2^{j-1}\|\mathcal{J}_m\|
\left(\begin{array}{c}j-1\\m-1\end{array}\right)\nu^m
\left(\mu+\nu\right)^{j-m}
\left\|\eta-\mathcal{J}_1J_1\eta\right\|_{L^q(B_a)}
\\
&\fl\le
\left\|\eta-\mathcal{J}_1J_1\eta\right\|_{L^q(B_a)}
\sum_{j=1}^{\infty}jM_2^{j-1}\left(\sum_{m=1}^{j-1}\|\mathcal{J}_m\|\right)
\left(\sum_{m=1}^{j-1}\left(\begin{array}{c}j-1\\m-1\end{array}\right)
\nu^m\left(\mu+\nu\right)^{j-m}\right)
\\
&\fl=
\nu\left\|\eta-\mathcal{J}_1J_1\eta\right\|_{L^q(B_a)}
\sum_{j=1}^{\infty}jM_2^{j-1}\left(\sum_{m=1}^{j-1}\|\mathcal{J}_m\|\right)
\left[\left(\mu+2\nu\right)^{j-1}-\nu^{j-1}\right]
\\
&\fl\le
\left\|\eta-\mathcal{J}_1J_1\eta\right\|_{L^q(B_a)}
\sum_{j=1}^{\infty}jM_2^{j-1}\left(\mu+2\nu\right)^j
\left(\sum_{m=1}^{j-1}\|\mathcal{J}_m\|\right).
\ea
Using $C_1>0$ in Theorem \ref{thm4_1}, we obtain
\ba
&\fl
\left\|\eta-\widetilde{\eta}\right\|_{L^q(B_a)}
\le
C_1\|\mathcal{J}_1\|\left\|\eta-\mathcal{J}_1J_1\eta\right\|_{L^q(B_a)}
\sum_{j=1}^{\infty}jM_2^{j-1}\left(\mu+2\nu\right)^{j+1}
\frac{1-\left(\mu+2\nu\right)^{j-1}}{1-\left(\mu+2\nu\right)}
\\
&\fl\le
C_1\left\|\eta-\mathcal{J}_1J_1\eta\right\|_{L^q(B_a)}
\frac{\mu+2\nu}{1-\left(\mu+2\nu\right)}
\sum_{j=1}^{\infty}j\left[M_2\left(\mu+2\nu\right)^{j-1}\right]
\\
&\fl=
C_3\left\|\eta-\mathcal{J}_1J_1\eta\right\|_{L^q(B_a)},
\ea
where
\be
C_3=C_1\frac{\mu+2\nu}{1-\left(\mu+2\nu\right)}
\sum_{j=1}^{\infty}j\left[M_2\left(\mu+2\nu\right)^{j-1}\right].
\ee

We have
\be
\left\|\widetilde{\eta}\right\|_{L^q(B_a)}\le
\sum_{j=1}^{\infty}\|\mathcal{J}_j\psi^{\otimes j}\|_{L^q(B_a)}
\le
C_2\sum_{j=1}^{\infty}\left(\mu+2\nu\right)^j\|\mathcal{J}_1\psi\|_{L^q(B_a)}^j.
\ee
Hence $\widetilde{\eta}$ converges. We note that
\ba
\left\|\widetilde{\eta}-\sum_{j=1}^N\mathcal{J}_j\psi^{\otimes j}\right\|_{L^q(B_a)}
&\le
\sum_{j=N+1}^{\infty}\left\|\mathcal{J}_j\psi\otimes\cdots\otimes\psi\right\|_{L^q(B_a)}
\\
&\le
C_2\sum_{j=Ns+1}^{\infty}\left(\mu+2\nu\right)^j\|\mathcal{J}_1\psi\|_{L^q(B_a)}^j
\\
&=
C_2\frac{\left[\left(\mu+2\nu\right)\|\mathcal{J}_1\psi\|_{L^q(B_a)}\right]^{N+1}}
{1-\left(\mu+2\nu\right)\|\mathcal{J}_1\psi\|_{L^q(B_a)}}
\ea
The proof is complete.
\end{proof}

The stability of the reconstruction is studied as follows.

\begin{thm}
Assume that there exists a constant $M_1<1$ such that $(\mu+2\nu)\|\mathcal{J}_1\|\le M_1$. Let $\eta_1,\eta_2$ denote the limits of the inverse Rytov series corresponding to some $\psi_1,\psi_2$. We suppose that $M_1M_3<1$, where $M_3=\max(\|\psi_1\|_{L^p(\Gamma)},\|\psi_2\|_{L^p(\Gamma)})$. Then there exists $C_4=C_4(M_1,M_3,\mu,\nu)>0$ such that
\be
\|\eta_1-\eta_2\|_{L^q(B_a)}<C_4\|\psi_1-\psi_2\|_{L^p(\Gamma)}.
\ee
\end{thm}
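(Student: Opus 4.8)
The plan is to follow the same telescoping strategy that produces the convergence estimate, but applied to the difference of two reconstructions rather than the difference between a reconstruction and the truth. Writing $\widetilde\eta_k=\sum_{j=1}^\infty\mathcal J_j\psi_k^{\otimes j}$ for $k=1,2$ (these are the limits $\eta_1,\eta_2$ by hypothesis and by the convergence theorem, since $M_3<1/(\mu+2\nu)$ will follow from $M_1M_3<1$ together with $(\mu+2\nu)\|\mathcal J_1\|\le M_1$ once one also knows $\|\mathcal J_1\psi_k\|\le\|\mathcal J_1\|\,\|\psi_k\|\le M_1M_3/(\mu+2\nu)<1/(\mu+2\nu)$), I would estimate
\be
\|\eta_1-\eta_2\|_{L^q(B_a)}=\left\|\sum_{j=1}^\infty\mathcal J_j\bigl(\psi_1^{\otimes j}-\psi_2^{\otimes j}\bigr)\right\|_{L^q(B_a)}
\le\sum_{j=1}^\infty\|\mathcal J_j\|\,\bigl\|\psi_1^{\otimes j}-\psi_2^{\otimes j}\bigr\|_{L^p(\Gamma^j)}.
\ee
The first factor is controlled by Theorem \ref{thm4_1}: $\|\mathcal J_j\|\le C_1(\mu+2\nu)^j\|\mathcal J_1\|$.

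For the second factor I would use the same multilinear telescoping identity already displayed in the previous proof,
\ba
\psi_1^{\otimes j}-\psi_2^{\otimes j}
&=(\psi_1-\psi_2)\otimes\psi_2^{\otimes(j-1)}+\psi_1\otimes(\psi_1-\psi_2)\otimes\psi_2^{\otimes(j-2)}+\cdots
\\
&\quad+\psi_1^{\otimes(j-1)}\otimes(\psi_1-\psi_2),
\ea
which gives $\bigl\|\psi_1^{\otimes j}-\psi_2^{\otimes j}\bigr\|_{L^p(\Gamma^j)}\le jM_3^{\,j-1}\|\psi_1-\psi_2\|_{L^p(\Gamma)}$. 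Substituting both bounds,
\be
\|\eta_1-\eta_2\|_{L^q(B_a)}\le C_1\|\mathcal J_1\|\,\|\psi_1-\psi_2\|_{L^p(\Gamma)}\sum_{j=1}^\infty j\,(\mu+2\nu)^j M_3^{\,j-1},
\ee
and the series $\sum_j j\bigl[(\mu+2\nu)M_3\bigr]^{j-1}$ converges precisely because $(\mu+2\nu)M_3\le\|\mathcal J_1\|^{-1}M_1M_3\cdot(\text{something})$—more directly, $(\mu+2\nu)\|\mathcal J_1\|\le M_1<1$ and the extra $M_3$ factor does not hurt; in fact we only need $(\mu+2\nu)M_3<1$, which I would derive from $M_1M_3<1$ after first noting $\mu+2\nu\le M_1/\|\mathcal J_1\|$ may exceed $1$, so the cleanest route is to bound $\sum_j j\,(\mu+2\nu)^jM_3^{j-1}=(\mu+2\nu)\sum_j j\bigl[(\mu+2\nu)M_3\bigr]^{j-1}$ and observe that the geometric ratio is $(\mu+2\nu)M_3$. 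Setting
\be
C_4=C_1\|\mathcal J_1\|\,(\mu+2\nu)\sum_{j=1}^\infty j\bigl[(\mu+2\nu)M_3\bigr]^{j-1}=\frac{C_1\|\mathcal J_1\|\,(\mu+2\nu)}{\bigl(1-(\mu+2\nu)M_3\bigr)^2}
\ee
finishes the argument, and $C_4$ depends only on $M_1,M_3,\mu,\nu$ as claimed (recall $\|\mathcal J_1\|\le M_1/(\mu+2\nu)$, so the prefactor is $\le M_1$).

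The one genuine subtlety—the step I expect to need the most care—is the convergence of the ratio $(\mu+2\nu)M_3$: the hypothesis is stated as $M_1M_3<1$, not $(\mu+2\nu)M_3<1$, and these agree only if $\mu+2\nu\le M_1$, which is not assumed. The resolution is that what actually enters is $\|\mathcal J_1\psi_k\|_{L^q(B_a)}$, not $M_3$ directly: from the previous theorem's machinery $\|\mathcal J_j\psi_k^{\otimes j}\|\le C_2(\mu+2\nu)^j\|\mathcal J_1\psi_k\|^j$, and $\|\mathcal J_1\psi_k\|\le\|\mathcal J_1\|M_3\le M_1M_3/(\mu+2\nu)$, so $(\mu+2\nu)\|\mathcal J_1\psi_k\|\le M_1M_3<1$. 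Hence I would run the telescoping estimate not on $\psi_1^{\otimes j}-\psi_2^{\otimes j}$ in $L^p(\Gamma^j)$ against $\|\mathcal J_j\|$, but rather factor $\mathcal J_j$ through $\mathcal J_1^{\otimes j}$ as in the recursion and telescope the tensor products of $\mathcal J_1\psi_k$ instead, so that the relevant small parameter is $(\mu+2\nu)M_1M_3<1$ throughout. With that bookkeeping the geometric and arithmetico-geometric series all converge and $C_4=C_4(M_1,M_3,\mu,\nu)$ emerges in closed form.
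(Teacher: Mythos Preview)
Your main line of argument---triangle inequality over $j$, the telescoping identity for $\psi_1^{\otimes j}-\psi_2^{\otimes j}$, the bound $\|\mathcal{J}_j\|\le C_1(\mu+2\nu)^j\|\mathcal{J}_1\|$ from Theorem~\ref{thm4_1}, and summation of the resulting arithmetico-geometric series---is exactly the paper's proof. The paper reaches the same expression $C_1\|\mathcal{J}_1\|\,\|\psi_1-\psi_2\|\sum_j j(\mu+2\nu)^jM_3^{j-1}$, absorbs one factor of $(\mu+2\nu)\|\mathcal{J}_1\|\le M_1<1$, and declares $C_4=C_1\sum_j j(\mu+2\nu)^{j-1}M_3^{j-2}$ without further comment.

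The subtlety you flag---that the hypothesis $M_1M_3<1$ does not by itself force $(\mu+2\nu)M_3<1$, which is what the series for $C_4$ needs---is genuine, and the paper's proof does not address it either; it simply writes down $C_4$ and stops. Your proposed remedy is sound: since $\mathcal{J}_j=(\cdots)\,\mathcal{J}_1\otimes\cdots\otimes\mathcal{J}_1$ by construction, one can telescope $(\mathcal{J}_1\psi_1)^{\otimes j}-(\mathcal{J}_1\psi_2)^{\otimes j}$ instead, so that the governing ratio becomes $(\mu+2\nu)\|\mathcal{J}_1\|M_3\le M_1M_3<1$ and the final factor $\|\mathcal{J}_1\psi_1-\mathcal{J}_1\psi_2\|\le\|\mathcal{J}_1\|\,\|\psi_1-\psi_2\|$ still delivers the Lipschitz estimate. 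In this respect your version is more careful than the paper's.
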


\begin{proof}
We begin with the following inequality.
\be
\|\eta_1-\eta_2\|_{L^q(B_a)}\le
\sum_{j=1}^{\infty}\left\|\mathcal{J}_j\psi_1\otimes\cdots\otimes\psi_1-
\mathcal{J}_j\psi_2\otimes\cdots\otimes\psi_2\right\|_{L^q(B_a)}.
\ee
We note that
\ba
&
(\psi_1\otimes\cdots\otimes\psi_1)-(\psi_2\otimes\cdots\otimes\psi_2)
\\
&=
(\psi_1-\psi_2)\otimes\psi_2\otimes\cdots\otimes\psi_2+
\psi_1\otimes(\psi_1-\psi_2)\otimes\psi_2\otimes\cdots\otimes\psi_2+\cdots
\\
&+
\psi_1\otimes\cdots\otimes\psi_1\otimes(\psi_1-\psi_2)\otimes\psi_2+
\psi_1\otimes\cdots\otimes\psi_1\otimes(\psi_1-\psi_2).
\ea
We obtain
\ba
&\fl
\|\eta_1-\eta_2\|_{L^q(B_a)}
\\
&\fl\le
\sum_{j=1}^{\infty}\|\mathcal{J}_j\|\sum_{k=1}^j
\|\psi_1\|_{L^p(\Gamma)}\cdots\|\psi_1\|_{L^p(\Gamma)}\|(\psi_1-\psi_2)\|_{L^p(\Gamma)}\|\psi_2\|_{L^p(\Gamma)}\cdots\|\psi_2\|_{L^p(\Gamma)},
\ea
where $\|(\psi_1-\psi_2)\|_{L^p(\Gamma)}$ is in the $k$th position of the product. Furthermore,
\ba
\|\eta_1-\eta_2\|_{L^q(B_a)}
&\le
\sum_{j=1}^{\infty}j\|\mathcal{J}_j\|M_3^{j-1}\|\psi_1-\psi_2\|_{L^p(\Gamma)}
\\
&\le
C_1\|\mathcal{J}_1\|\|\psi_1-\psi_2\|_{L^p(\Gamma)}\sum_{j=1}^{\infty}j\left(\mu+2\nu\right)^jM_3^{j-1}
\\
&\le
\frac{C_1}{M_3}\|\psi_1-\psi_2\|_{L^p(\Gamma)}\sum_{j=1}^{\infty}j\left(\mu+2\nu\right)^{j-1}M_3^{j-1}.
\ea
The proof is complete if we put
\be
C_4=C_1\sum_{j=1}^{\infty}j\left(\mu+2\nu\right)^{j-1}M_3^{j-2}.
\ee
\end{proof}

\section{Two-dimensional radial problem}
\label{toymodel}

\subsection{Setup}

Let us assume the two-dimensional radial geometry, which was considered in \cite{Moskow-Schotland09}. We consider diffuse optical tomography for this domain. In the polar coordinate system we have $x=(r,\theta)$, where $r$ is the radial coordinate and $\theta$ is the angular coordinate. Let $\Omega$ be the disk of radius $R$ centered at the origin. Assuming that $\eta$ has the radial symmetry, we write
\be
\eta(x)=\eta(r),\quad 0<r<R.
\ee
Let us suppose that $\eta$ is given by
\be
\eta(r)=\cases{
\eta_a,&$0\le r\le R_a$,
\\
0,&$R_a<r\le R$.}
\ee

Although point sources were used in \cite{Moskow-Schotland09}, here we assume the following spatially oscillating source term for diffuse optical tomography in spatial frequency domain.
\be
f(r,\theta)=e^{i\alpha\theta}\frac{1}{r}\delta(r-R),\quad\alpha=1,\dots,M_S.
\ee
Hereafter we write
\be
g=k^2,\quad k>0.
\ee
We define $\ell=\zeta D_0$ and set $D_0=1$. We write $\Omega_1=\{x;\;|x|\le R_a\}$, $\Omega_2=\{x;\;R_a<|x|<R\}$, and $k_a=k\sqrt{1+\eta_a}$. Let $r_x,r_y$ be the radial coordinates of $x,y$. Let $\theta_x,\theta_y$ be the angular coordinates of $x,y$.

\subsection{Forward problem}

Let us express the Green's function $G(x,y)$, which has the source term $\frac{1}{r_x}\delta(r_x-r_y)\delta(\theta_x-\theta_y)$, as \cite{Markel-Schotland02}
\be
G(x,y)=\frac{1}{2\pi}\sum_{n=-\infty}^{\infty}e^{in(\theta_x-\theta_y)}g_n(r_x,r_y),
\ee
where $g_n(r,r')$ satisfies
\ba
r^2\pp_r^2g_n(r,r')+r\pp_rg_n(r,r')-\left(k^2r^2+n^2\right)g_n(r,r')=-r\delta(r-r'),
\\
g_n(R,r')+\ell\pp_rg_n(R,r')=0.
\ea
We note that the homogeneous equation for the above equation is the modified Bessel differential equation. Hence the solution $u$ is given as a superposition of $I_n(kr),K_n(kr)$. Here, $I_n,K_n$ are the modified Bessel functions of the first and second kinds, respectively. We obtain
\ba
g_n(r_x,r_y)
&=
K_n\left(k\max(r_x,r_y)\right)I_n\left(k\min(r_x,r_y)\right)
\\
&-
\frac{K_n(kR)+k\ell K'_n(kR)}{I_n(kR)+k\ell I'_n(kR)}I_n\left(kr_x\right)I_n\left(kr_y\right).
\ea
We note
\be\fl
I_n'(x)=\frac{1}{2}\left(I_{n-1}(x)+I_{n+1}(x)\right),\quad
K_n'(x)=-\frac{1}{2}\left(K_{n-1}(x)+K_{n+1}(x)\right),\quad x\in\Rm.
\ee
Hence,
\be
u_0(x)=\int_0^{2\pi}\int_0^RG(x,y)f(r_y,\theta_y)r_y\,dr_yd\theta_y=
e^{i\alpha\theta_x}g_{\alpha}(r_x,R).
\ee
We have
\be
g_{\alpha}(R,R)=I_{\alpha}(kR)K_{\alpha}(kR)-d_{\alpha}I_{\alpha}(kR),
\ee
where
\be
d_{\alpha}=\frac{K_{\alpha}(kR)+k\ell K'_{\alpha}(kR)}{I_{\alpha}(kR)+k\ell I'_{\alpha}(kR)}I_{\alpha}(kR).
\ee

For the forward data, we observe $u,u_0$ at $r_x=R$, $\theta_x=0$. That is, the outgoing light is measured at one point on the boundary while boundary values were observed at different points on the boundary in \cite{Moskow-Schotland09}. See \ref{fwd} for the calculation of $u$. Let us set $M_D=1$ (i.e., $M_{\rm SD}=M_S$). For the vector $\vv{\psi}\in\Rm^{M_{\rm SD}}$, we have
\begin{equation}
\psi_{\alpha}=\ln\frac{\left(K_{\alpha}(kR)-d_{\alpha}\right)I_{\alpha}(kR)}{I_{\alpha}(kR)K_{\alpha}(kR)+b_{\alpha}K_{\alpha}(kR)+c_{\alpha}I_{\alpha}(kR)}
\label{fdata}
\end{equation}
for $\alpha=1,\dots,M_{\rm SD}$. We note that $b_{\alpha},c_{\alpha}$, which are given in \ref{fwd}, depend on $\eta_a,R_a,k,R,\ell$.

Let us introduce
\be
G^{(n)}(r_x,r_y):=g_n(r_x,r_y)r_y.
\ee
We obtain
\ba
&\fl\left(K_j\eta^{\otimes j}\right)(x)
\\
&\fl=
(-1)^{j+1}g^je^{i\alpha\theta_x}\int_0^R\cdots\int_0^R
G^{(\alpha)}(R,r_{y_1})G^{(\alpha)}(r_{y_1},r_{y_2})\cdots
G^{(\alpha)}(r_{y_{j-1}},r_{y_j})G^{(\alpha)}(r_{y_j},R)
\\
&\fl\times 
\eta(r_{y_1})\cdots\eta(r_{y_j})\,dr_{y_1}\cdots dr_{y_j},\quad x\in\Gamma,
\ea
and
\ba
&\fl\left(\check{K}_j\eta^{\otimes j}\right)(x)=
\left(\frac{1}{u_0}K_j\eta^{\otimes j}\right)(x)
\\
&\fl=
\frac{(-1)^{j+1}g^j}{G^{(\alpha)}(R,R)}\int_0^R\cdots\int_0^R
G^{(\alpha)}(R,r_{y_1})G^{(\alpha)}(r_{y_1},r_{y_2})\cdots
G^{(\alpha)}(r_{y_{j-1}},r_{y_j})G^{(\alpha)}(r_{y_j},R)
\\
&\fl\times 
\eta(r_{y_1})\cdots\eta(r_{y_j})\,dr_{y_1}\cdots dr_{y_j},
\quad x\in\Gamma.
\ea

\subsection{Implementation of the inverse Rytov series}
\label{implem}

Let us begin by writing
\be
\psi_{\alpha}=
\sum_{j=1}^{\infty}\left(J_j^{(\alpha)}\eta^{\otimes j}\right)(R),
\quad\alpha=1,\dots,M_{\rm SD}.
\ee
We consider how the $j$th-order operator $\mathcal{J}_j$ in the inverse Rytov series can be numerically constructed. Here we assume that $r\in(0,R)$ is discretized into $N_r$ points $r_i$ ($i=1,\dots,N_r$) with small interval $\Delta r$. Thus, $\eta$ can be expressed by a vector $\vv{\eta}\in\Rm^{N_r}$.

\subsubsection{Forward vectors}

We set
\be
r_i=i\Delta r\quad(i=1,\dots,N_r),\quad\Delta r=\frac{R}{N_r}.
\ee
Let $\vv{b}\in\Rm^{N_r}$ be a vector. We define $\vv{K}_0\in\Rm^{M_{\rm SD}}$, $\vv{K}_1\in\Rm^{M_{\rm SD}N_r}$ as
\ba
&
\{\vv{K}_0\}_{\alpha}=-G^{(\alpha)}(R,R),
\\
&
\{\vv{K}_1(\vv{b})\}_{i+(\alpha-1)N_r}=g\Delta r
\sum_{n=1}^{N_r}G^{(\alpha)}(r_i,r_n)G^{(\alpha)}(r_n,R)\{\vv{b}\}_n
\ea
for $1\le\alpha\le M_{\rm SD}$, $1\le i\le N_r$. Moreover,
\ba
&\{\vv{K}_j(\vv{b}_1,\dots,\vv{b}_j)\}_{i+(\alpha-1)N_r}
\\
&=-g\Delta r\sum_{n=1}^{N_r}G^{(\alpha)}(r_i,r_n)\{\vv{b}_j\}_n
\{\vv{K}_{j-1}(\vv{b}_1,\dots,\vv{b}_{j-1})\}_{n+(\alpha-1)N_r}.
\ea
Using there $\vv{K}_j\in\Rm^{M_{\rm SD}N_r}$, we introduce
\ba
&\fl\left\{\vv{J}_j(\vv{b}_1,\dots,\vv{b}_j)\right\}_{\alpha}
=\sum_{m=1}^j\frac{(-1)^m}{m\{\vv{K}_0\}_{\alpha}^m}
\\
&\fl\times
\sum_{i_1+\cdots i_m=j}
\left\{\vv{K}_{i_1}(\vv{b}_1,\dots,\vv{b}_{i_1})\right\}_{\alpha N_r}
\cdots
\left\{\vv{K}_{i_m}(\vv{b}_{j-i_m+1},\dots,\vv{b}_j)\right\}_{\alpha N_r}
\ea
for $\alpha=1,\dots,M_{\rm SD}$.

\subsubsection{Linearized problem}

In particular, we have
\be
\left\{\vv{J}_1(\vv{b})\right\}_{\alpha}
=-\frac{1}{\{\vv{K}_0\}_{\alpha}}\left\{\vv{K}_1(\vv{b})\right\}_{\alpha N_r}
\ee
for $\alpha=1,\dots,M_{\rm SD}$, $i=1,\dots,N_r$. From this, we can define a matrix $\underline{J}_1\in\Rm^{M_{\rm SD}\times N_r}$ such that $\vv{J}_1(\vv{b})=\underline{J}_1\vv{b}$ as
\be
\left\{\underline{J}_1\right\}_{\alpha,i}=
\frac{g\Delta r}{G^{(\alpha)}(R,R)}\left[G^{(\alpha)}(R,r_i)\right]^2.
\ee

Using $\underline{J}_1$, we compute $\underline{\mathcal{J}}_1$. Here, $\underline{\mathcal{J}}_1$ is the Moore-Penrose pseudoinverse with a regularizer such as the truncated singular value decomposition:
\be
\underline{\mathcal{J}}_1=\underline{J}_{1,{\rm reg}}^+\in\Rm^{N_r\times M_{\rm SD}}.
\ee
The first term of the inverse Rytov series can be calculated as
\be
\vv{\eta}_1=\underline{\mathcal{J}}_1\vv{\psi},
\ee
where
\be
\{\vv{\eta}_1\}_i=\eta_1(r_i),\quad i=1,\dots,N_r.
\ee
We solve $\vv{\psi}=\underline{J}_1\vv{\eta}_1$ as follows.

\paragraph{Underdetermined}

Suppose we have
\be
M_{\rm SD}\le N_r.
\ee
That is, the inverse problem is underdetermined.

In this case, we obtain
\be
\vv{\eta}_1=\underline{J}_{1,{\rm reg}}^+\vv{\psi},
\ee
where
\be
\underline{J}_{1,{\rm reg}}^+=\underline{J}_1^*\underline{M}_{\rm reg}^{-1},
\quad \underline{M}=\underline{J}_1\underline{J}_1^*.
\ee
Here, $*$ denotes the Hermitian conjugate and ${\rm reg}$ means that the pesudoinverse is regularized by discarding singular values that are smaller than $\sigma_0$. Let $\sigma_j^2$ and $\vv{v}_j$ be the eigenvalues and eigenvectors of the matrix $\underline{M}$:
\be
\underline{M}\vv{z}_j=\sigma_j^2\vv{z}_j.
\ee
We obtain
\be
\vv{\eta}_1=\sum_{j\atop\sigma_j>\sigma_0}\frac{1}{\sigma_j^2}
\left(\vv{z}_j^*\vv{\psi}\right)\underline{J}_1^*\vv{z}_j.
\ee

\paragraph{Overdetermined}

Suppose we have
\be
M_{\rm SD}\ge N_r.
\ee
That is, the inverse problem is overdetermined.

In this case, we obtain
\be
\vv{\eta}_1=\underline{J}_{1,{\rm reg}}^+\vv{\psi},
\ee
where
\be
\underline{J}_{1,{\rm reg}}^+=\underline{M}_{\rm reg}^{-1}\underline{J}_1^*,
\quad \underline{M}=\underline{J}_1^*\underline{J}_1.
\ee
After solving the eigenproblem $\underline{M}\vv{z}_j=\sigma_j^2\vv{z}_j$, we obtain
\be
\vv{\eta}_1=\sum_{j\atop\sigma_j>\sigma_0}\frac{1}{\sigma_j^2}
\left(\vv{z}_j^*\underline{J}_1^*\vv{\psi}\right)\vv{z}_j.
\ee

\subsubsection{Inversion}

Let $\vv{a}_1,\dots,\vv{a}_j$ be real vectors of dimension $M_{\rm SD}$. To compute the $j$th-order term $\vv{\eta}_j$, let us first introduce
\be
\vv{\eta}^{(1)}_i=\underline{\mathcal{J}}_1\vv{a}_i\quad (i=1,\dots,j).
\ee
We introduce vector $\vv{\mathcal{J}}_j(\vv{a}_1,\dots,\vv{a}_j)\in\Rm^{N_r}$ which has a recursive structure: for $j=1$,
\be
\vv{\mathcal{J}}_1(\vv{a}_1)=\vv{\eta}^{(1)}_1,
\ee
and for $j\ge2$,
\ba
&
\vv{\mathcal{J}}_j(\vv{a}_1,\dots,\vv{a}_j)=
-\sum_{m=1}^{j-1}\sum_{i_1+\cdots+i_m=j}
\\
&
\vv{\mathcal{J}}_m\left(
\vv{J}_{i_1}(\vv{\eta}^{(1)}_1,\dots,\vv{\eta}^{(1)}_{i_1}),\cdots,
\vv{J}_{i_m}(\vv{\eta}^{(1)}_{j-i_m+1},\dots,\vv{\eta}^{(1)}_j)\right).
\ea
More specifically, $\vv{\mathcal{J}}_j(\vv{a}_1,\dots,\vv{a}_j)$ can be computed as follows. If $j=1$, then $\vv{\eta}^{(1)}_1$ is returned. For $j\ge2$, we let $m$ move from $1$ to $j-1$. We form the compositions $[i_1,\dots,i_m]$ such that $i_1+\cdots+i_m=j$. For each $m$ ($1\le m\le j-1$) and each composition $(i_1,\dots,i_m)$, we compute
\be
\vv{\eta}_{\rm tmp}=
-\vv{\mathcal{J}}_m\left(
\vv{J}_{i_1}(\vv{\eta}^{(1)}_1,\dots,\vv{\eta}^{(1)}_{i_1}),\cdots,
\vv{J}_{i_m}(\vv{\eta}^{(1)}_{j-i_m+1},\dots,\vv{\eta}^{(1)}_j)\right).
\ee
Let $\vv{\Sigma}(m)$ denote the sum of $\vv{\eta}_{\rm tmp}$ for all $\left(\begin{array}{c}j-1\\ m-1\end{array}\right)$ compositions. The above step is repeated for all $m$ ($1\le m\le j-1$). We obtain
\be
\vv{\mathcal{J}}_j(\vv{a}_1,\dots,\vv{a}_j)=\sum_{m=1}^{j-1}\vv{\Sigma}(m).
\ee

The $j$th term is calculated as
\be
\vv{\eta}_j=\vv{\mathcal{J}}_j(\vv{\psi},\dots,\vv{\psi}).
\ee
In this way, we obtain $\vv{\eta}_j$ ($j=1,\dots,N$). The $N$th-order approximation is given by
\be
\vv{\eta}^{(N)}=\vv{\eta}_1+\cdots+\vv{\eta}_N.
\ee

Finally, the reconstruction can be done as follows.We have $\mu_a(r)=g(1+\eta(r))$. The reconstructed $\mu_a(r)$ is obtained as
\be
\mu_a(r_i)\approx \mu_a^{(N)}(r_i)
=g\left(1+\left\{\vv{\eta}^{(N)}\right\}_i\right).
\ee

\subsection{Numerical results}

We set $k=1$, $R=3$, $R_a=1.5$, $\ell=0.3$. Moreover, $N_r=M_{\rm SD}=90$. We chose $\sigma_0$ such that the largest $23$ singular values were taken. Since only $23$ singular values are taken, $\eta$ is not fully reconstructed. When reconstructing $\eta$, we obtain at most $\vv{\eta}_{\rm proj}\in\Rm^{N_r}$, which is given by
\be
\vv{\eta}_{\rm proj}=\underline{\mathcal{J}}_1\underline{J}_1\vv{\eta}.
\ee
In Figs.~\ref{fig1} through \ref{fig3}, $\vv{\eta},\vv{\eta}_{\rm proj},\vv{\eta}^{(1)},\vv{\eta}^{(2)},\vv{\eta}^{(3)},\vv{\eta}^{(4)},\vv{\eta}^{(5)}$ are shown.

In Fig.~\ref{fig1}, $\eta_a=0.2$ (Left) and $1.0$ (Right), and in Fig.~\ref{fig2}, $\eta_a=2.0$ (Left) and $5.0$ (Right). In the case of $\eta_a=0.2$, the first Rytov approximation $\vv{\eta}^{(1)}$ is different from $\vv{\eta}_{\rm proj}$ but the third Rytov approximation $\vv{\eta}^{(3)}$ already gives a good reconstruction. For $\eta_a=1.0$, the reconstruction approaches $\eta_{\rm proj}$ after the fifth term $\vv{\eta}_5$ is added. When $\eta_a=2.0$, the reconstruction is reasonable after $\vv{\eta}_5$ is added but still different from $\vv{\eta}_{\rm proj}$. When $\eta_a$ is large and $\eta_a=5.0$, all reconstructions differ from $\vv{\eta}_{\rm proj}$.

\begin{figure}[t]
\begin{center}
\includegraphics[width=0.45\textwidth]{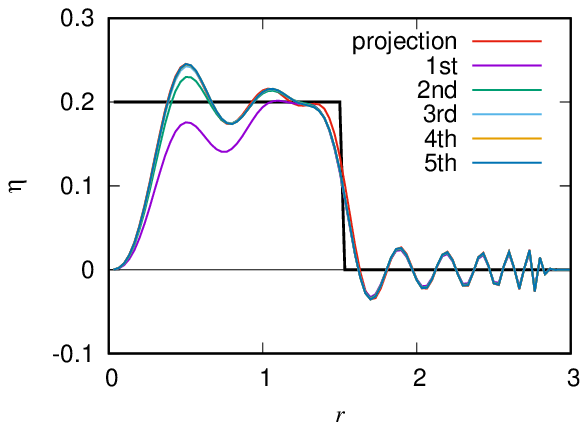}
\hspace{5mm}
\includegraphics[width=0.45\textwidth]{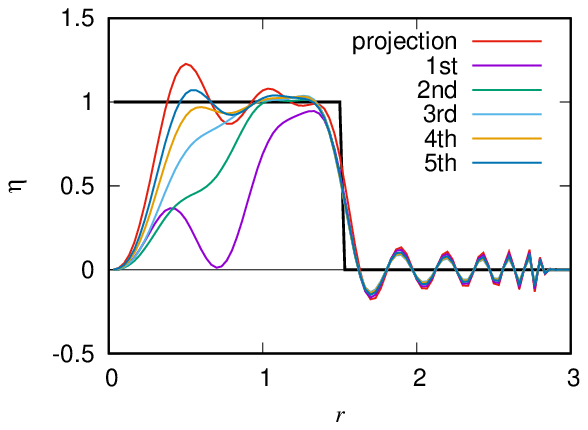}
\vspace{10mm}
\end{center}
\caption{Reconstruction of $\eta$. The forward data is given in (\ref{fdata}). We set (Left) $\eta_a=0.2$ and (Right) $\eta_a=1$.}
\label{fig1}
\end{figure}

\begin{figure}[t]
\begin{center}
\includegraphics[width=0.45\textwidth]{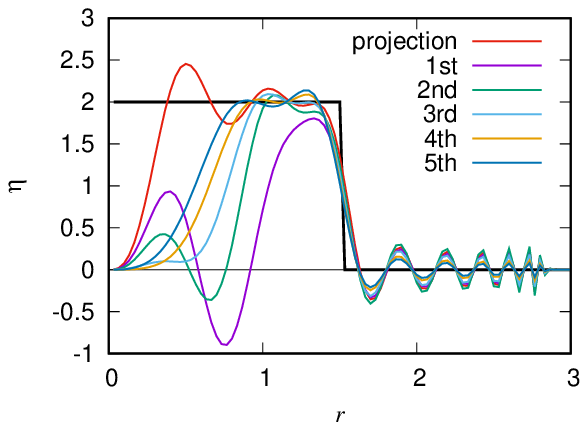}
\hspace{5mm}
\includegraphics[width=0.45\textwidth]{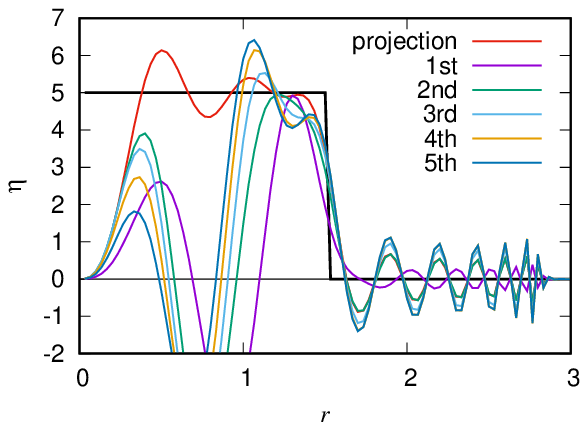}
\vspace{10mm}
\end{center}
\caption{Reconstruction of $\eta$. The forward data is given in (\ref{fdata}). We set (Left) $\eta_a=2$, and (Right) $\eta_a=5$.}
\label{fig2}
\end{figure}

Noise was added for Fig.~\ref{fig3}. For both $u_0,u$, Gaussian noise with mean zero was added. The standard deviation of the noise was the standard deviation of $u_0$ multiplied by a constant $\gamma$. For $u_0,u$, no noise was added when the resulting value became negative. Figure \ref{fig3} shows the reconstruction of $\eta$ for $\eta_a=1.0$. Due to noise, fewer numbers of singular values had to be used. The largest $9$ singular values were used for $\gamma=10^{-4}$ and the largest $7$ singular values were used when $\gamma=10^{-5}$.

\begin{figure}[t]
\begin{center}
\includegraphics[width=0.45\textwidth]{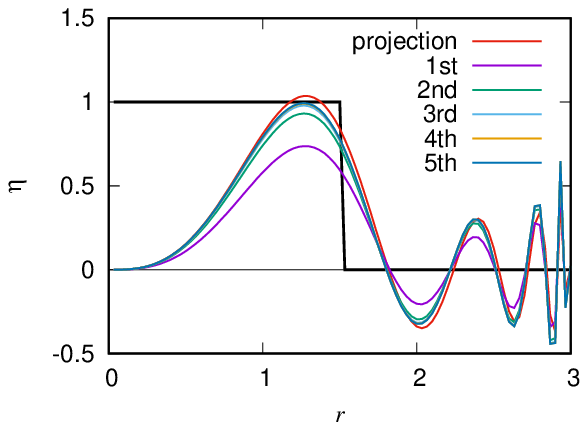}
\hspace{5mm}
\includegraphics[width=0.45\textwidth]{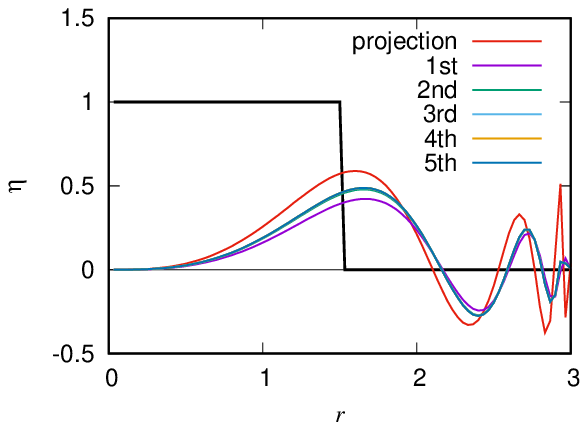}
\vspace{10mm}
\end{center}
\caption{Reconstruction of $\eta$ when $\eta_a=1.0$. Gaussian noise with (Left) $\gamma=10^{-4}$ and (Right) $\gamma=10^{-5}$ was added. The largest $9$ and $7$ singular values were used for the weaker and stronger noise levels, respectively.}
\label{fig3}
\end{figure}

\section{Concluding remarks}
\label{concl}

In this paper, multilinear forward operators $J_j:L^q(B_a)\times\cdots\times L^q(B_a)\to L^p(\Gamma)$ and inverse operators $\mathcal{J}_j:L^p(\Gamma)\times\cdots\times L^p(\Gamma)\to L^q(B_a)$ were considered. As was done for the inverse Born series \cite{Bardsley-Vasquez14,Hoskins-Schotland22,Lakhal18,Moskow-Schotland19}, it is possible to consider the inverse Rytov for nonlinear inverse problems in Banach spaces $X,Y$, for which the forward problem is from $X$ to $Y$ instead of from $L^q(B_a)$ to $L^p(\Gamma)$.

Although the expression of $\psi_j$ in the Rytov series is more complicated than that of $u_j$ in the Born series, the inverse Rytov series can be computed in a recursive manner.

In this paper, the diffusion coefficient $D_0$ was assumed to be a known constant. Markel and Schotland has discussed the simultaneous reconstruction of the two functions with the (first) Rytov approximation \cite{Markel-Schotland04}. It is an interesting future issue to extend the inverse Rytov series to the case of simultaneous reconstruction.

\section*{Acknowledgments}

This work was supported by JST, PRESTO Grant Number JPMJPR2027.

\appendix


\section{Forward data}
\label{fwd}

Let $G_a$ be the Green's function of the two-dimensional radial problem for the equation in which $\eta=\eta_a$ in $r\in[0,R_a]$ and $\eta=0$ otherwise. In the case of the delta-function source $\delta(x-x_s)$, $x_s\in\pp\Omega$, we have \cite{Moskow-Schotland09}
\be
G_a(x,x_s)=\frac{1}{2\pi}\sum_{n=-\infty}^{\infty}a_ne^{in(\theta-\theta_s)}I_n(k_ar),\quad
r\in\Omega_1,
\ee
and
\ba
G_a(x,x_s)
&=
\frac{1}{2\pi}\sum_{n=-\infty}^{\infty}e^{in(\theta-\theta_s)}I_n(kr)K_n(kR)
\\
&+
\frac{1}{2\pi}\sum_{n=-\infty}^{\infty}e^{in(\theta-\theta_s)}\left(b_nK_n(kr)+c_nI_n(kr)\right),\quad r\in\Omega_2.
\ea
Here, coefficients $a_n,b_n,c_n$ can be computed as the solution to the following system of linear equations, which is derived from the interface and boundary conditions.
\ba
&
\left(\begin{array}{ccc}
I_n(k_aR_a) & -K_n(kR_a) & -I_n(kR_a) \\
k_aI_n'(k_aR_a) & -kK_n'(kR_a) & -kI_n'(kR_a) \\
0 & K_n(kR)+k\ell K_n'(kR) & I_n(kR)+k\ell I_n'(kR)
\end{array}\right)
\left(\begin{array}{c}
a_n \\ b_n \\ c_n
\end{array}\right)
\\
&=
\left(\begin{array}{c}
I_n(kR_a)K_n(kR) \\
kI_n'(kR_a)K_n(kR) \\
k\ell I_n(kR)K_n'(kR)+I_n(kR)K_n(kR)
\end{array}\right).
\ea
Therefore we obtain for $x\in\pp\Omega$,
\ba
u(x)
&=\int_0^{2\pi}\int_0^RG_a(x,y)f(r_y,\theta_y)r_y\,dr_yd\theta_y
\\
&=
Re^{i\alpha\theta_x}\left[I_{\alpha}(kR)K_{\alpha}(kR)+b_{\alpha}K_{\alpha}(kR)+c_{\alpha}I_{\alpha}(kR)\right].
\ea

\section*{References}

\end{document}